\documentclass{article}[11pt,a4paper]


\usepackage[utf8]{inputenc}
\usepackage{booktabs}

\usepackage{amsthm}
\usepackage{amsmath}
\usepackage{amsfonts}
\usepackage{authblk}

\newtheorem{theorem}{Theorem}[section]
\newtheorem{lemma}[theorem]{Lemma}
\newtheorem{corollary}[theorem]{Corollary}

\usepackage{complexity}
\usepackage{cite}
\usepackage{xspace}
\usepackage[usenames,dvipsnames]{xcolor}
\usepackage{todonotes}
\usepackage{caption}
\usepackage{subcaption}
\usepackage{float}
\usepackage{url}
\usepackage{paralist}
\usepackage{nicefrac}
\usepackage{eqnarray}
\usepackage[margin=1.5in]{geometry}

\newcommand{\EDP}{EDP\xspace}	
	
\newcommand{\net}{\ensuremath{\mathrm{net}}}
\newcommand{\plotscale}{0.8}
\newcommand{\plotscaleOne}{0.45}
\begin{document}
\author[1]{Thomas Leibfried}
\author[2]{Tamara Mchedlidze}
\author[1]{Nico Meyer-H\"ubner}
\author[2]{\mbox{Martin N\"ollenburg}}
\author[2]{Ignaz Rutter}
\author[2]{Peter Sanders}
\author[2]{\authorcr Dorothea Wagner}
\author[2]{Franziska Wegner}

\affil[1]{Institute of Energy Systems and High Voltage Technology}
\affil[2]{Institute of Theoretical Informatics\authorcr \vspace{1ex} Karlsruhe Institute of Technology, Germany}


\title{Operating Power Grids with Few Flow Control Buses} 
\date{}
\maketitle
\begin{abstract}
Future power grids will offer enhanced controllability due to the increased availability of power flow control units (FACTS). As the installation of control units in the grid is an expensive investment, we are interested in using few controllers to achieve high controllability. In particular, two questions arise: How many flow control buses are necessary to obtain globally optimal power flows? And if fewer flow control buses are available, what can we achieve with them?

Using steady state IEEE benchmark data sets, we explore experimentally that already a small number of controllers placed at certain grid buses suffices to achieve globally optimal power flows. We present a graph-theoretic explanation for this behavior. To answer the second question we perform a set of experiments that explore the existence and costs of feasible power flow solutions at increased loads with respect to the number of flow control buses in the grid. We observe that adding a small number of flow control buses reduces the flow costs and extends the existence of feasible solutions at increased load.
\end{abstract}
\section{Introduction}
\label{sec:introduction}
The central task of any electrical power infrastructure is the reliable and cost-efficient supply of electrical energy to industry and population on a national or even continental scale. 
Future power grids and their usage are subject to fundamental changes due to the shift towards renewable distributed energy production and the installation of new power flow control units, which offer increased control, but make the grid operation more demanding.
Not only do these changes lead to a much larger number of independent power producers (IPP), which are highly distributed in the network, but they also cause very different patterns of energy flow. 
For example, regions with off-shore wind farms may sometimes produce enough energy to supply remote consumers, but at other times they are consumers themselves.
In particular, this may require long-distance energy transmission and frequent flow direction changes. Most of the existing power grids, however, were not designed for such transmission patterns. The current strategy to cope with these changes is to either extend the grid with additional transmission lines, or to install advanced control units to facilitate better utilization of the existing infrastructure.

In this paper, we consider the latter option and study the advantageous effects of making selected buses of a power grid to controllable, both in terms of the minimum number of controllable buses needed for achieving maximum flow control and in terms of the operation costs and the existence of feasible power flows at critical line capacities. 

In abstract terms, we assume that a \emph{flow-control bus} is able to flexibly distribute the entire power flow at this bus among the incident edges, as long as Kirchhoff's current law (or the \emph{flow-conservation property}) is satisfied, i.e., the in-flow to the bus equals its out-flow.
These flow control buses can be realized using power electronics devices known as \emph{flexible AC transmission systems} (FACTS), which are a class of power systems that have the capabilities to control various electrical bus parameters~\cite{Hingorani1993a,gh-ufact-00}. 
More specifically, since we are interested in controlling the real power flow on the branches incident to a particular bus, we can realize our flow control buses by installing on each (but one) incident branch a \emph{unified power flow controller} (UPFC), which is a FACTS that is able to control the voltage magnitude and angle and consequently has control of the real and reactive power flow on the particular branch~\cite{Noroozian1997,gh-ufact-00}.

One of the most important tasks in operating a power grid is to decide the energy production of each power generator such that supply and consumption are balanced and the resulting power flow does not exceed the thermal limits of the power lines. 
Among all solutions we are interested in one that minimizes the total energy production and transmission cost. 
This is called \emph{Economic Dispatch Problem} (EDP).
The main approach for solving this problem in power grids without FACTS is the optimal power flow (OPF) method, a numerical method that was introduced in 1962 by Carpentier~\cite{Carpentier62} and has subsequently been refined and generalized, see the recent surveys by Frank et al.~\cite{SurveyOPF1,SurveyOPF2}. 
However, OPF is not designed for hybrid power grids with flow control buses and cannot exploit the extended flow control possibilities to obtain globally optimal solutions. 

Hence, we propose in Section~\ref{sec:model} a new hybrid DC-based model for power flows in power grids that combine traditional grid buses with some flow-control buses.
In order to answer our questions on the effects of installing flow control buses, we solve the \EDP in our hybrid model using a linear programming (LP) formulation.
Our LP combines a standard graph-theoretical network flow model, which already includes Kirchhoff's current law at all buses, with additional constraints for Kirchhoff's voltage law in those parts of the grid that are not equipped with flow control buses. Thus we are able to obtain electrically feasible power flows that minimize, similarly to OPF, the overall flow costs in terms of generation and transmission costs.

Using the well-known IEEE power systems test cases, we performed simulation experiments related to two key questions, which take into account that the FACTS needed for realizing our flow control buses in reality constitute a significant and expensive investment and hence their number should be as small as possible. 
\begin{compactenum}
	\item How many flow control buses are necessary to obtain globally optimal power flows and which buses need to be controlled?
	\item If the number of available flow control buses is given, do we still see a positive effect on the flow costs and on the operability of the grid when approaching its capacity limits?
\end{compactenum}

In Section~\ref{sec:hybridmodel} we address the first question. 
In our experiments we determine the minimum number of flow control buses necessary to achieve the same solution quality as in a power grid in which each bus is controllable and which clearly admits an upper bound on what can be achieved with the network topology. 
Interestingly, it turns out that a relatively small number of flow control buses are sufficient for this.
In fact, we can prove a theorem stating a structural graph-theoretic property, which, if met by the placement of flow control buses, implies the optimality of the power flow and serves as a theoretical explanation of the observed behavior.
Section~\ref{sec:grid-control-when-approaching-capacity-limits} deals with the second question of operating a power grid close to its capacity limits, which becomes increasingly relevant as the consumption of electrical energy grows faster than the grid capacities.
Our experiments indicate that installing few flow control buses in a power grid is sufficient not only to achieve  lower costs compared to an OPF solution, but also allows to operate the grid at capacities for which no feasible OPF solution exists any more.
\section{Related Work}
\label{sec:related-work}
With the increasing availability and technological advancement of
FACTS researchers began to study the possible benefits of their
installation in power grids from different perspectives.

From an economic perspective, it is of interest to support investment
decisions in power grid expansion planning by considering alternative
investment strategies that either focus on new transmission lines or
allow mixed approaches including FACTS placement.  Blanco et
al. \cite{bogr-rovfi-11} present a least-squares Monte-Carlo method
for evaluating investment strategies and argue that FACTS allow for a
more flexible, mixed strategy that fares better under uncertainty.
Tee and Ilic present an optimal decision-making framework for
comparing investment decisions, including FACTS~\cite{ti-oidte-12}.

From the perspective of operating a power grid, the main question is
how many and where FACTS should be placed in order to optimize a
certain criterion.  Cai et al.~\cite{1397562} propose and
experimentally evaluate a genetic algorithm for allocating different
types of FACTS in a power grid in order to optimally support a
deregulated energy market.  Gerbex et al.~\cite{gcg-olmtf-01} and
Ongsakul and Jirapong~\cite{1465551} study the placement of FACTS with
the goal of increasing the amount of energy that can be transferred.
Gerbex et al.~\cite{gcg-olmtf-01} present a genetic algorithm that
optimizes simultaneously the energy generation costs, transmission
losses, line overload, and the acquisition costs for FACTS.  Ongsakul
and Jirapong~\cite{1465551} use evolutionary programming to place
FACTS such that the total amount of energy that can be transferred
from producers to consumers is maximized; in contrast to our setting,
they may also increase the demands of consumers arbitrarily.
In contrast to these heuristic approaches Lima et
al.~\cite{lgkm-psp-03} use mixed-integer linear programming to
optimally increase the loadability of a system by placing FACTS
subject to limits on their number or cost.  Similar to our approach,
they do not distinguish different types of FACTS but rather assume
``ideal'' FACTS that can control all transmission parameters of a
branch.  In contrast to our work, they focus only on loadability and
do not consider generation costs and line losses.

All related work mentioned so far considers the DC model for
electrical networks as an approximation to the AC model and aims at
providing a preliminary step in an actual planning process, where this
approximation is sufficient.  There are also a few attempts to solve
the placement problem for FACTS in the more realistic but also more
complicated AC model.  
Sharma et al.~\cite{sgv-nps-03} develop an
evaluation whether transmission lines are critical and propose to
place FACTS at critical lines in order to improve voltage stability in
the grid.  Ippolito and Siano~\cite{is-sonl-04} present a genetic
algorithm for FACTS placement in AC networks and experimentally
evaluate it in a case study.
In contrast to these heuristic approaches, Farivar and Low~\cite{6507352} 
observe exact OPF evaluation in a relaxed AC-model. In this context, they 
place phase shifters to exploit structural characteristics that are similar to our approach.
\section{Preliminaries}
\label{sec:preliminaries}
In this section we recall some basic notions from graph theory.
Although, for technical reasons, the graphs we use for modeling power
grids are directed, when considering the topology of the network, we
always consider the underlying undirected graph.  Thus, in the
following let $G$ be an undirected graph.

The graph $G$ is \emph{connected} if it contains a path between any
two vertices.  A \emph{connected component} of $G$ is a maximal
connected subgraph of $G$ (maximal with respect to inclusion).  A
\emph{cactus} is a graph where every edge is contained in at most one
cycle.  A \emph{forest} is a graph that does not contain a cycle.

A \emph{cutvertex} is a vertex of a graph whose removal increases the
number of connected components.  A \emph{biconnected component} is a
maximal subgraph that does not have a cutvertex.  Note that a
biconnected component of a forest is either \emph{trivial} in the
sense that it consists of a single vertex, or it consists of a single
edge.  Similarly, a biconnected component of a cactus is trivial, a
single edge, or a cycle.

A \emph{feedback set} of $G = (V,E)$ with respect to a class of graphs
$\mathcal G$ is a set of vertices $F \subseteq V$ such that $G-F \in
\mathcal G$.  We will only be interested in feedback sets with respect
to forests and cacti.  The former is also called \emph{feedback vertex
  set}.  Naturally, one is interested in finding a set $F$ that is as
small as possible.
\section{Model}
\label{sec:model}
In this section we introduce three graph-theoretic flow models for optimal power flows.
Our models are based on the DC power grid model~\cite{Hammerstrom2007, Zimmerman2011a,sja-dcpfr-09}, which is commonly used as an approximation of AC grids~\cite{Purchala2005,ocs-acdc-04}.
We model a power grid as a graph $G = (V, E)$, where $V$ is the set of vertices and $E\subseteq \binom{V}{2}$ is the set of edges. The vertices represent the buses, some of which may be special generator and consumer buses, and the edges represent the branches, which may be transmission lines between the incident buses or transformers.
There is a subset $V_G \subseteq V$ of the vertices that represents generator buses.
Each generator $g \in V_G$ has a maximum supply $x_g \in \mathbb R^{+}$ and is equipped with a convex cost function $\gamma_g > 0$ that is assumed to be piecewise linear with
\begin{equation}
	\label{eq:generatorcostfunction}	
	\gamma_g(x) = \max\{a_i x + c_i \mid(a_i, c_i)\in F_g\},
\end{equation}
where $F_g$ is the set of all piecewise linear functions of $\gamma_g$ and $a_i\leq a_{i+1}$. 
Further, there is a subset $V_C \subseteq V \setminus V_G$ of consumer buses.
Each consumer $u \in V_C$ has a power demand $d_u \in \mathbb R$.

Each branch $e \in E$ has a thermal limit, which is modeled as a capacity function $c\colon E\to\mathbb{R}$ restricting the real power. 
Further, each branch causes a certain loss of power depending on the physical branch parameters and the actual power flow on the branch. 
These losses are again approximated as a convex, piecewise linear function $\ell_{e}$ for each edge $e \in E$ with
\begin{equation}
	\label{eq:losscostfunction}	
	\ell_{e}(x) = \max\{a_i x + c_i \mid(a_i, c_i)\in F_{e}\},
\end{equation}
where $F_e$ is the set of all piecewise linear functions of $\ell_e$ and $a_i\leq a_{i+1}$.

A \emph{flow} $f$ in the power grid $G$ is a function $f \colon V \times V \to
\mathbb{R}$ with the property that 
\begin{equation}
	f(u,v) = -f(v,u) \quad \forall u,v \in V
\end{equation}
For every vertex $u$ in $G$, we define its \emph{net
  out-flow} \[f_\net(u) = \sum_{\{u,v\} \in E} f(u,v).\] 
For a flow $f$, we further define two types of costs, the \emph{generator
  costs} \[c_g(f) = \sum_{g \in V_G} \gamma_g(f_\net(g))\] and the
\emph{line losses} \[c_\ell(f) = \sum_{\{u,v\} \in E} \ell_{\{u,v\}}(|f(u,v)|)\,.\] To
obtain the overall cost for the flow $f$, we weight these two terms as
\begin{equation}
c_\lambda(f) = \lambda \cdot c_g(f) + (1-\lambda) \cdot c_\ell(f)\label{eq:objective}
\end{equation}
where $\lambda\in [0,1]$.  Our goal is to minimize this objective
function in several different power flow models.
\subsection{Power Flow Models}
\label{sec:power-flow-models}
The most basic model is the \emph{flow model}, where $f$ has to
satisfy the following constraints.
\begin{align}
 -c(e) \le f(u,v) & \le c(e)  &  \forall e=\{u,v\} \in E\label{eq:capacity}\\
f_\net(v) & =  0 & v \in V\setminus (V_G \cup V_C)\label{eq:conservation}\\
f_\net(v) & =   -d_v & v \in V_C\label{eq:consumer}\\
0 \le f_\net(v) & \le x_v &  v \in V_G\label{eq:generator} 
\end{align}
We call a flow satisfying these constraints \emph{feasible}.
Equation~(\ref{eq:capacity}) models the thermal limits or real power capacities of all branches and is called \emph{capacity constraint}. 
Equation~(\ref{eq:conservation}) models that vertices that are neither generators nor consumers have zero net out-flow and is called \emph{flow conservation constraint}. 
Equation~(\ref{eq:consumer}) models that all consumer demands are satisfied and is called \emph{consumer constraint}.
Finally, Equation~(\ref{eq:generator}) models that all generators respect their production limits and is called \emph{generator constraint}.

The flow model neglects some physical properties of electrical flows,
in particular Kirchhoff's voltage law.  Thus, the computed power
flows can only be applied to power grids where every vertex is
a control vertex.  In contrast, the \emph{electrical flow
  model}, e.g., according to Zimmerman et al.~\cite{Zimmerman2011a}, models the
power flow via the same set of constraints as the flow model, but additionally requires
the existence of a suitable voltage angle assignment $\Theta \colon V \to
\mathbb{R}$ such that for each branch $\{u,v\}$ the following equation
holds 
\begin{equation}
  \label{eq:electricalflowintro}
  f(u,v) = B(u,v) (\Theta(u) - \Theta(v))\, .  
\end{equation}
Here $B(u,v)$ is the \emph{susceptance} of the branch $(u,v)$.  This is
equivalent to restricting the model to feasible flows that also satisfy
Kirchhoff's voltage law, or, in other words, no flow control buses are used.
This yields a model that matches the situation in the traditional power grids existing today.
We call a feasible flow $f$ \emph{electrically feasible} if
there exists a voltage angle assignment $\Theta$
satisfying~(\ref{eq:electricalflowintro}).

Recall from the introduction that flow control buses can be technically realized by \mbox{UPFCs}, which is a FACTS. Ideal FACTS as introduced by Griffin et al.~\cite{julieGriffin} are often used to simplify the modeling of FACTS by using a linear model and assuming a complete and independent control of the real and reactive power. Our flow control buses are ideal FACTS that control the power flow to all incident edges. 
The flow model---in contrast to the electrical model---assumes flow control buses at each vertex, whereas the
electrical model assumes no immediate control of the power flow.  
Instead, the grid is balanced by changing
the generator outputs only.  In the following we propose a
\emph{hybrid model} that combines the flow model and the electrical flow model in order to handle power grids with flow control buses at a subset of selected vertices.

Let $F \subseteq V$ be a subset of vertices of $G$.  We denote by
$G_F$ the power network obtained from $G$ by considering all vertices in
$F$ as flow control buses.  We call any subgraph~$G' = G[V']$ induced by
a subset $V' \subseteq V \setminus F$ of the vertices without controllers a \emph{native power grid} of $G$.  A flow of $G$ is
\emph{electrically feasible} for a native power grid $G' \subseteq G$
if there exists a voltage angle assignment $\Theta \colon E \to \mathbb{R}$
such that every edge in $G'$ satisfies
Equation~(\ref{eq:electricalflowintro}).  In this case we call
$\Theta$ \emph{feasible (voltage) angle assignment} for $G'$.

A feasible flow $f$ is \emph{electrically feasible for $G_F$} if and
only if $f$ is electrically feasible for the \emph{maximal native
  power grid} $G-F = G[V \setminus F]$.  Intuitively, this models the
fact that a power flow in $G_F$ must be a feasible flow and that it
satisfies the second Kirchhoff law in the maximum native power grid.

Obviously, if $F \subseteq F'$ and $f$ is an electrically feasible
flow for $G_F$, then $f$ is also electrically feasible for $G_{F'}$.
Hence the minimum value of the cost $c_\lambda$ does not increase when adding
more flow control buses.

We note that each of the models can easily be expressed as a linear
program (LP), and thus in all three models an optimal solution can be
computed efficiently~\cite{Bazaraa:2004:LPN:1062374}; see Appendix~\ref{app:lp}. 
However, the flow model can be reduced to a special minimum cost network flow 
problem, for which efficient exact optimization algorithms exist~\cite{Goldberg19971}.  We 
describe this reduction in the following.
%
\begin{table}[tb!]
\centering
\begin{tabular}{lrrrr}
  \toprule
  \multicolumn{1}{c}{case} & \multicolumn{1}{c}{$nb$} & \multicolumn{1}{c}{$nl$} & \multicolumn{1}{c}{$ng$} & \multicolumn{1}{c}{$p_d$} \\ 
  \midrule
  \texttt{case6} &   6 &  11 &   3 & 210.00 \\ 
  \texttt{case9} &   9 &   9 &   3 & 315.00 \\ 
  \texttt{case14} &  14 &  20 &   5 & 259.00 \\ 
  \texttt{case30} &  30 &  41 &   6 & 189.20 \\ 
  \texttt{case39} &  39 &  46 &  10 & 6254.23 \\ 
  \texttt{case57} &  57 &  78 &   7 & 1250.80 \\ 
  \texttt{case118} & 118 & 179 &  54 & 4242.00 \\ 
   \bottomrule
\end{tabular}
\caption{IEEE benchmark set with $nb$, $nl$, $ng$ and $p_d$ representing number of buses, number of transmission lines, number of generators and total power demand, respectively.} 
\label{tab:examples}
\end{table}
\subsection{Reduction to MinCostFlow}
\label{sec:model-mincost-flow}
Let  $N = (G = (V,E); (s,t); c; a)$ be an \emph{$s$-$t$ flow network}
consisting of a directed \mbox{(multi-)} graph $G$, two dedicated source and sink vertices $s, t \in V$, edge capacities $c: E \to \mathbb R_0^+$, and edge costs $a: E \to \mathbb R_0^+$. 
A \emph{flow} $f$ in $N$ is a function $f: E \to \mathbb R_0^+$ and it is called \emph{feasible} if it satisfies the capacity constraint~(\ref{eq:capacity}) and a flow conservation constraint similar to~(\ref{eq:conservation}):
\begin{equation}\label{eq:flow}
 \sum_{(u,v) \in E} f(u,v) - \sum_{(v,u) \in E} f(v,u) =  0 \quad \forall v \in V\setminus \{s,t\}
\end{equation}
The \emph{value} $|f|$ of a flow $f$ is the total flow from $s$ to $t$, i.e., $|f| = \sum_{(u,t) \in E} f(u,t) = \sum_{(s,u) \in E} f(s,u)$. A feasible flow $f$ with maximum value is called a \emph{maximum flow} in $N$.
For a given flow value $b$ the \emph{min-cost $s$-$t$ flow problem} is to find a feasible flow $f$ of value $|f| = b$ such that the cost $c_N(f) = \sum_{e \in E} a(e) \cdot f(e)$ is minimized.

In order to transform the graph $G=(V,E)$ of a power grid into an $s$-$t$ flow network $N$, we first add a new source vertex $s$ and a new sink vertex $t$ to $V$. 
Each generator $g \in V_G$ is connected by a directed edge $(s,g)$ with capacity $c(s,g) = x_g$ to the source $s$. 
Each consumer $u \in V_C$ is connected by a directed edge $(u,t)$ with capacity $c(u,t) = d_u$ to the sink $t$. 
Further, we replace each original undirected edge $\{u,v\} \in E$ by two directed copies $(u,v)$ and $(v,u)$, whose capacities $c(u,v) = c(v,u)$ are given by their common thermal limit $c(\{u,v\})$. 

Next, we define the edge costs. 
It is well known that a convex, piecewise linear edge cost function $h$ can easily be modeled in a flow network by replacing the respective edge $(u,v)$ with as many copies as the linear pieces of the cost function. 
The edge capacities are defined by the differences between consecutive breakpoints of $h$ and sum up to $c(u,v)$; the individual costs correspond to the costs as defined by the linear pieces between the breakpoints. 
Thus for ease of presentation we refrain from explicitly modeling convex piecewise linear cost functions in $N$. 
We rather assume that the flow cost $z_\lambda$ is given for each edge $(u,v)$ ($u \ne s$, $v \ne t$) by the weighted loss function $z_\lambda((u,v),f) = (1-\lambda) \cdot \ell_{\{u,v\}}(f(u,v))$, where $\lambda$ is the weight parameter of Equation~(\ref{eq:objective}). 
The edges $(s,g)$ from the source $s$ to a generator $g$ have cost $z_\lambda((s,g),f) = \lambda \cdot \gamma_g(f(s,g))$ and the edges incident to the sink $t$ have cost $0$. 
Then the objective function to be minimized is $z_\lambda(f) = \sum_{e \in E} z_\lambda(e,f)$.
Finally, we set the target flow value $b$ to the total demand $\sum_{u \in V_C} d_u$ of all consumers. 
By construction, every feasible minimum-cost flow in $N$ is a feasible minimum-cost flow in the underlying power grid $G$ and vice versa.
\section{Placing Flow Control Buses} 
\label{sec:hybridmodel}
In this section we seek to answer the question how many flow control buses
are necessary to obtain a globally optimal solution.  
Recall that the flow model is a relaxation of the physical model and uses fewer 
constraints. Therefore, optimal solutions in the flow model are at least as good as
in the physical model.

Given a power grid $G=(V,E)$, we say that making the vertices in $F$ flow control buses achieves \emph{full control} if the objective value of an
optimal energy flow for the grid $G_F$ is the same as the objective
value of an optimal solution in the flow model (or equivalently in the
hybrid network $G_V$, where every vertex is a flow control bus).  Our experiments indicate that in the IEEE instances often a
small fraction of the vertices is sufficient to achieve full control.
Afterwards we give a graph-theoretical explanation of this behavior.
\subsection{Experiments}
\label{sub:experiments-control-units}
\begin{figure}[tb!]%
\centering
  \begin{subfigure}[t]{.492\textwidth}
  	\centering
    	\includegraphics[width=\plotscale\linewidth, page=4,trim=0cm 0cm 0cm 0cm]{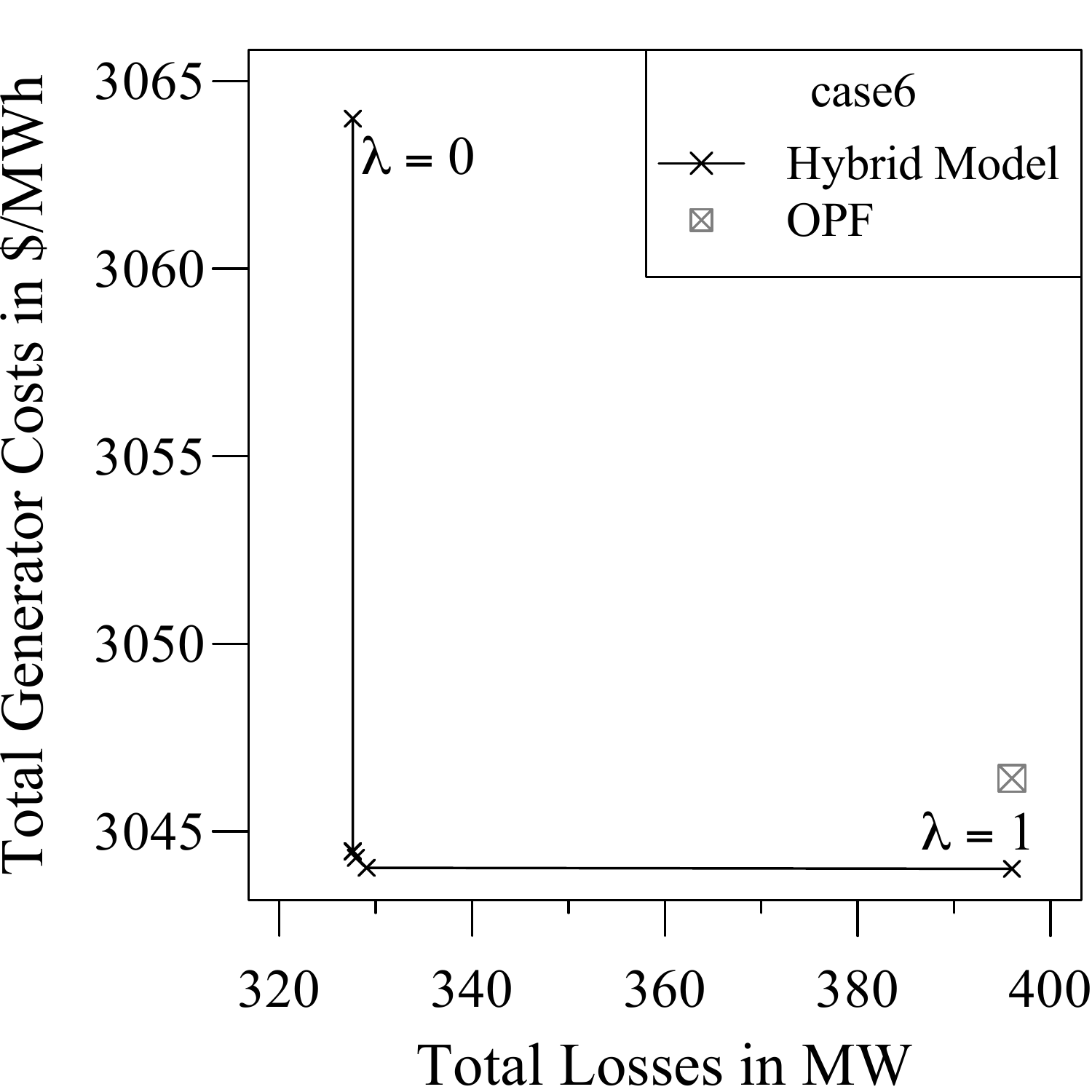}
	\caption{Total costs and losses for the IEEE benchmark data with 30 buses, where the square cross marks the solution computed by OPF.}
	\label{fig:plot-costs-losses}
\end{subfigure}
\hfill
  \begin{subfigure}[t]{.492\textwidth}
  	\centering
    	\includegraphics[width=\plotscale\linewidth, page=1,trim=0cm 0cm 0cm 0cm]{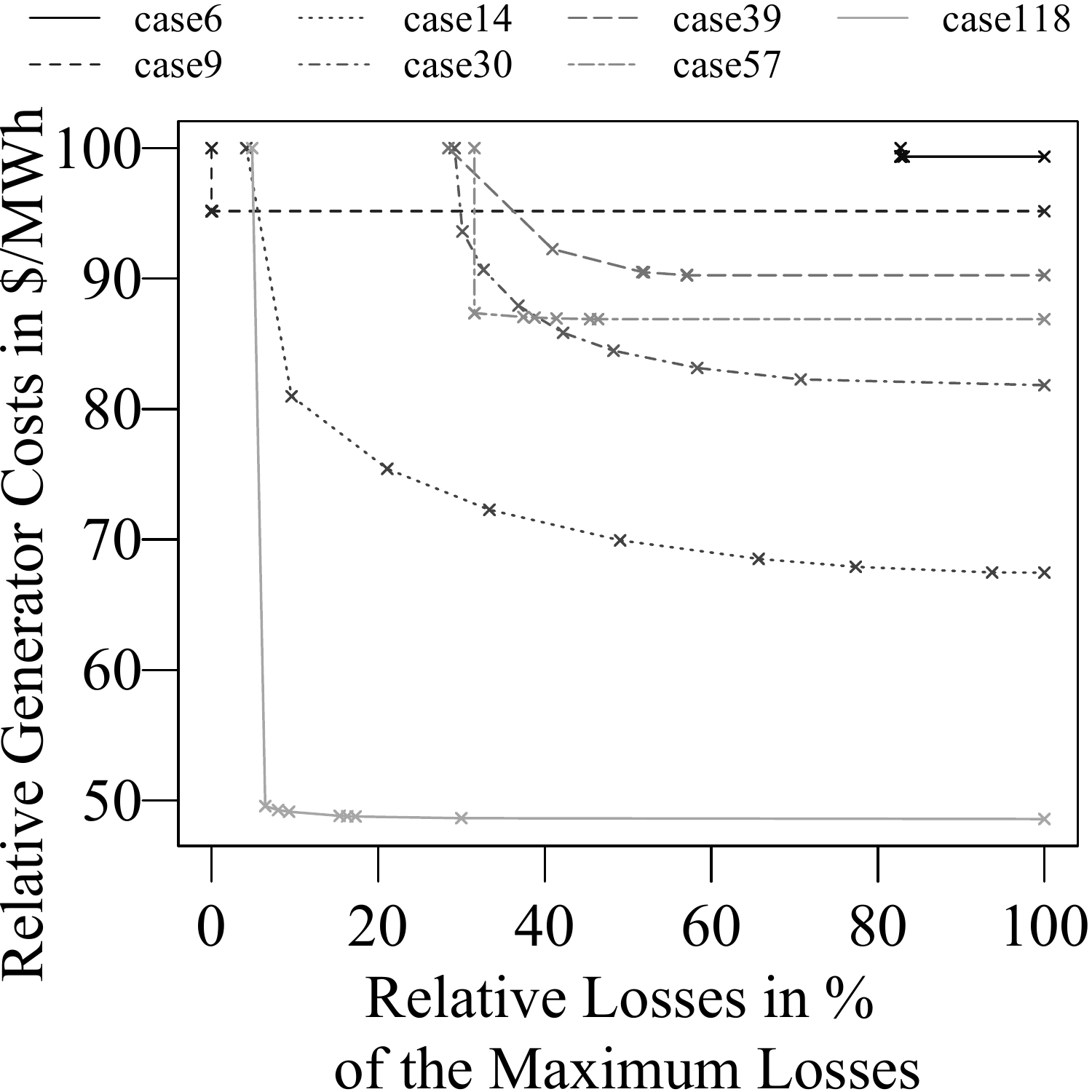}
	\caption{Relative costs and losses for the IEEE benchmark data sets.}
	\label{fig:plot-costs-losses-normalized}
\end{subfigure}
\caption{Trade-off of generator costs and losses
          normalized to the maximum generator cost ($\lambda = 0$) and
          the maximum loss ($\lambda = 1)$ as $\lambda$ varies
          from $0$ to $1$.}
\end{figure}

For our evaluation we use the IEEE benchmark data sets\footnote{data
  sources \url{http://www.pserc.cornell.edu/matpower/} and
  \url{http://www.ee.washington.edu/research/pstca/}} shown in
Table~\ref{tab:examples}. There each case is named according to the
number of buses $nb$.  The number of generators and the number of
edges are denoted $ng$ and $nl$, respectively.

 To obtain piece-wise linear functions for generator costs and line
 losses, we simply sample the cost functions using a specified number
 of sampling points.  Note further that our approach requires convex
 cost functions, but this is fine in practice~\cite{wood1996power}; in
 particular the functions are convex for the IEEE benchmark instances.

 We performed our experiments on an AMD Opteron processor 6172 running
 openSUSE 12.2. Our implementation is written in Python 2.7.3 and uses
 PYPOWER\footnote{\url{https://pypi.python.org/pypi/PYPOWER/4.0.0}}, a
 Python port of MATPOWER~\cite{Zimmerman2009,Zimmerman2011a}, for
 computing OPF solutions.  For computing solutions and minimizing the
 number of control buses in our hybrid model we use the (integer)
 linear programming solver Gurobi
 6.0.0\footnote{\url{www.gurobi.com}}.

 First, we observe that the value of $\lambda$, which controls the
 weighting of costs and losses in the objective value has a
 significant effect on the objective values of generator costs and
 line losses.  
 Figure~\ref{fig:plot-costs-losses} shows the trade-off for the IEEE
 instance \texttt{case30} (the plots for the other instances can be
 found in Appendix~\ref{app:hybridmodel}).  The OPF solution, which
 ignores losses, is typically at the far end of the spectrum with high
 losses and is comparable to our solution with $\lambda = 1$.  As can
 be seen in Figure~\ref{fig:plot-costs-losses-normalized}, where the
 costs and losses are normalized to the maximum cost and the maximum
 loss per instance, the same trade-off behavior is present in all
 instances.  It thus makes sense to allow the operator of a power grid
 to choose the value of $\lambda$ in order to model the true operation
 costs.

On the other hand, it may then be the case that the number of flow control buses to achieve full control of the network varies depending on the
choice of~$\lambda$.  Figure~\ref{fig:plot-controller-weight} shows
for different values of~$\lambda$ the relative number of control
vertices necessary to achieve full control in each of the instances.
In most cases less than 15\% of all buses need to be controllers to
achieve full control.  For the cases with 6 buses and 14 buses this
percentage is slightly bigger, which is mainly an artifact stemming
from the small total size.  As can be seen, the required number of
units is relatively stable but drops to zero for~$\lambda = 1$,
i.e., when only the generator costs are considered. This is due to the fact that all IEEE instances have basically unlimited line capacities and thus do not restrict the possible flows.

\begin{figure}[tb!]%
\centering
    	\includegraphics[width=\plotscaleOne\linewidth, trim=0cm 0cm 0cm 0cm]{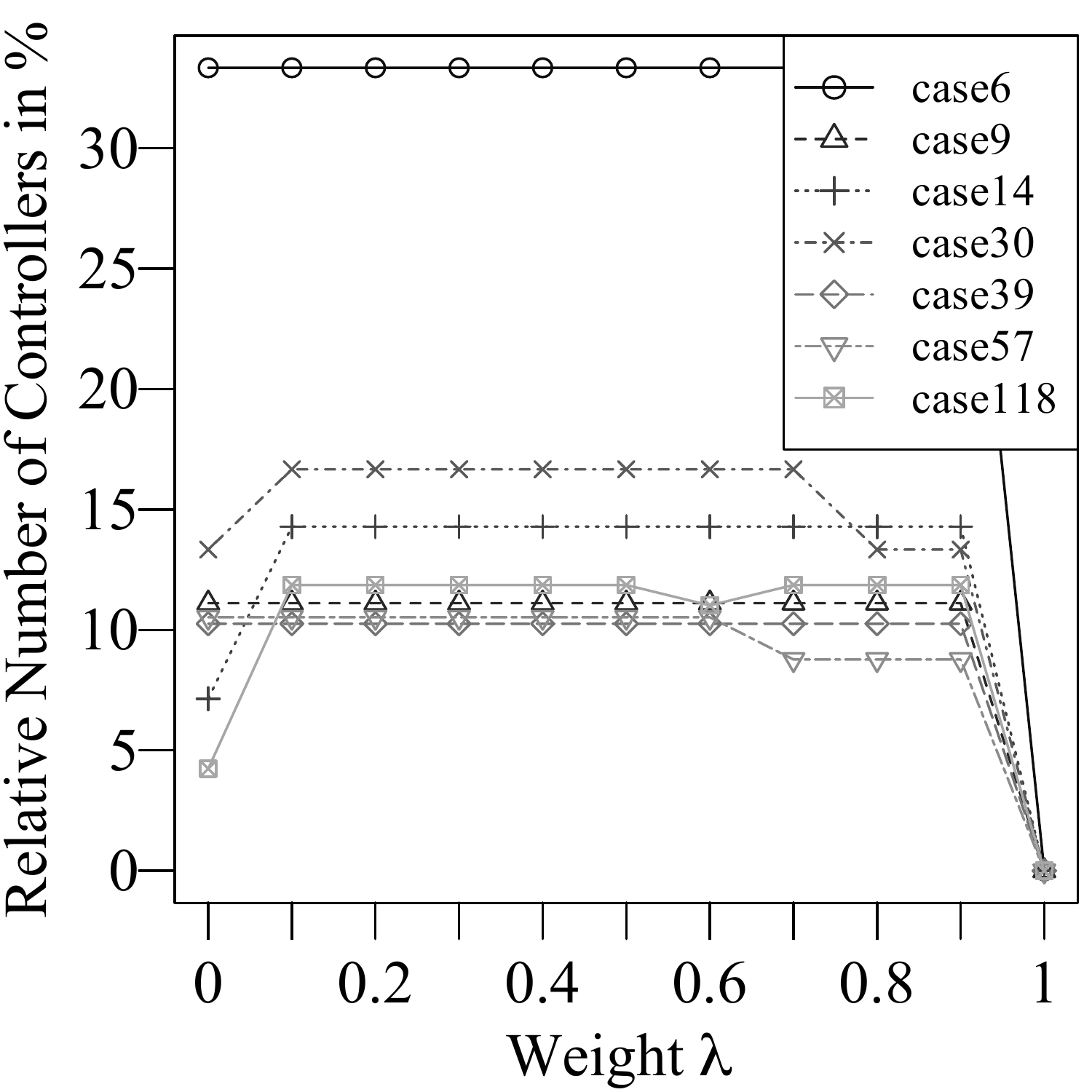}
	\caption{Relative number of controllers for achieving full
          control in the IEEE instances as $\lambda$ varies from $0$ to
          $1$.}%
	\label{fig:plot-controller-weight}%
\end{figure}%

In order to make a useful prediction on the number of vertices
required for full control that applies to all choices of~$\lambda$, in
the following we take for each instance the maximum of the smallest
possible number of vertices to achieve full control over all values 
of~$\lambda$ and refer to this as the number of vertices for achieving
full control of the instance.  This conservative choice ensures that
the numbers we compute are certainly an upper bound for achieving full
control, independent of the actual choice of~$\lambda$.
\subsection{Structure of Optimal Solutions}
\label{sub:hybridtheory}
As we have seen in our experimental evaluation, often a small number
of flow control buses is sufficient to ensure that solutions in the hybrid
model are the same as in the flow model.  In the following we provide
a theoretical explanation of this property and link it to structural
properties of power grids. Farivar and Low~\cite{6507352} give similar structural 
results on spanning trees, but using a different model.

A first observation is that flow control buses influence all
incident edges.  Thus, if every edge is incident to a flow control bus, i.e., the set $F$ is a \emph{vertex cover} of $G$, no
edge in the network is affected by
constraint~(\ref{eq:electricalflowintro}).  Then the flow model and
the hybrid model are equivalent and full control is achieved.
However, it is generally not true that power grids admit small vertex
covers; as shown in Figure~\ref{fig:barchart-cases-controller}, all
instances require more than 40\% of their vertices for a vertex cover.
In the following we show a much stronger result, namely that it
suffices for becoming independent of 
Equation~(\ref{eq:electricalflowintro}) that the native power grid
$G-F$ is an acyclic network.  Moreoever, if $\lambda = 1$, (line
losses are neglected) and edge capacities are ignored, it even
suffices that $G-F$ is a so-called \emph{cactus} graph, in which every
edge is part of at most one cycle.
\begin{figure}[tb!]
  	\centering
    \includegraphics[width=\plotscaleOne\linewidth, trim = 0cm 0cm 0cm 0cm]{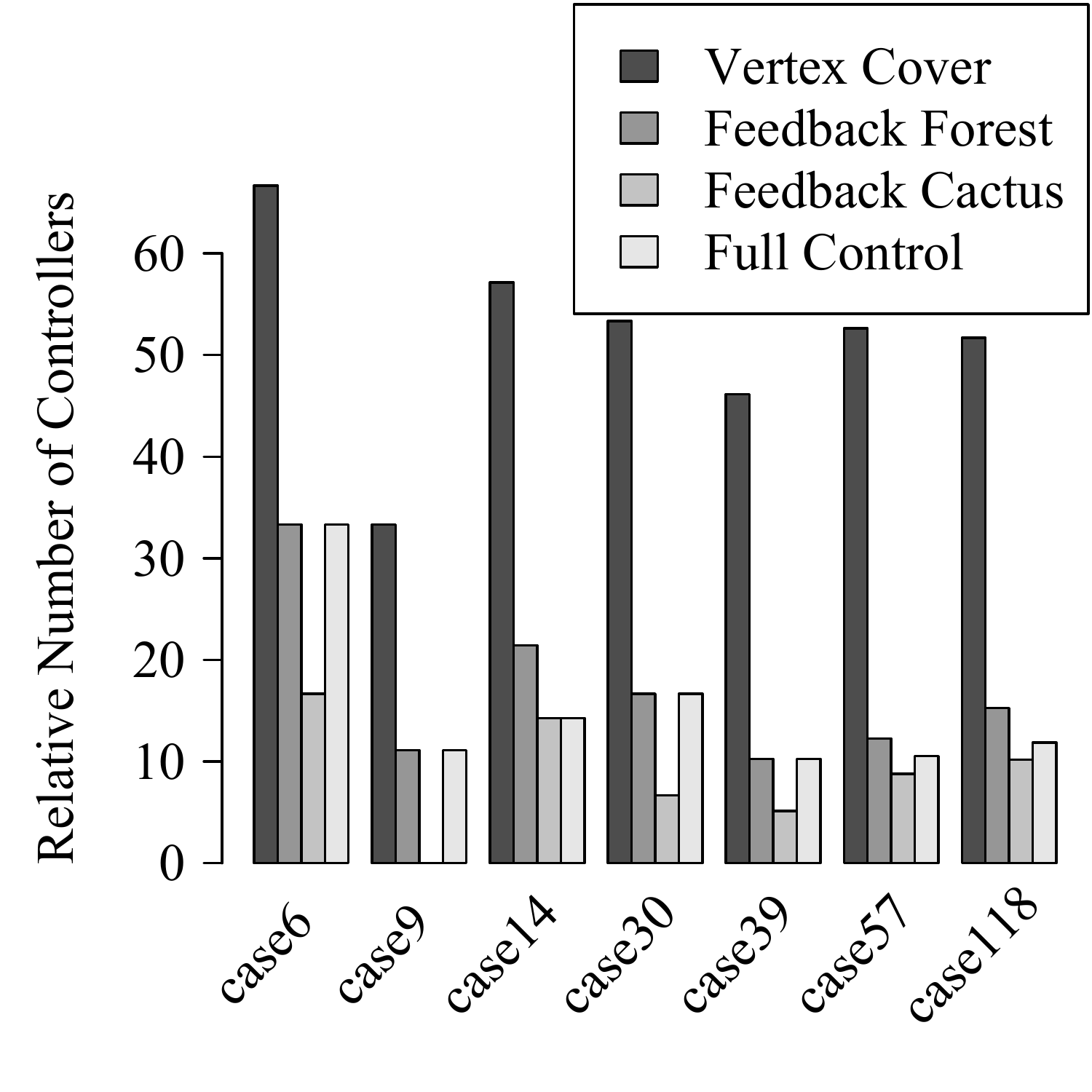}
    \caption{Comparison of the number of vertices which need to be
      removed from the network to get a Tree (Feedback Vertex Set) or
      a cactus, with the worst number of controller to have full
      control in the network.}
	\label{fig:barchart-cases-controller}
\end{figure}
\begin{lemma}
  \label{lem:decompose-powergrid-cutvertex}
  Let $H = (V,E)$ be a native power grid and let $v$ be a vertex whose
  removal disconnects $H$ into connected components with vertex sets
  $C_1,\dots,C_k$.  Then a flow $f$ is electrically feasible for $H$
  if and only if it is electrically feasible for $H_i = H[C_i \cup
  \{v\}]$ for $i=1,\dots,k$.
\end{lemma}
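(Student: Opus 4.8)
The plan is to prove the two implications separately. The \emph{only if} direction is immediate, and the \emph{if} direction is a gluing argument whose only real content is the observation that shifting all voltage angles within a connected region by a common constant leaves Equation~(\ref{eq:electricalflowintro}) intact.

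For the \emph{only if} direction I would start from a feasible angle assignment $\Theta\colon V\to\mathbb R$ witnessing that $f$ is electrically feasible for $H$. Every edge of $H_i = H[C_i\cup\{v\}]$ is also an edge of $H$, so the restriction of $\Theta$ to $C_i\cup\{v\}$ is a feasible angle assignment for $H_i$; hence $f$ is electrically feasible for each $H_i$. (This half in fact holds for every subgraph of $H$.)

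For the \emph{if} direction, suppose $\Theta_i\colon C_i\cup\{v\}\to\mathbb R$ is a feasible angle assignment for $H_i$, for each $i$. First I would record the partition induced by deleting $v$: the sets $C_1,\dots,C_k$ partition $V\setminus\{v\}$, and every edge of $H$ belongs to exactly one $H_i$ --- an edge between two vertices distinct from $v$ cannot connect different components of $H-v$, while an edge incident to $v$ reaches a unique component. Next I would exploit the gauge freedom: for any $\delta_i\in\mathbb R$ the map $\Theta_i+\delta_i$ is again feasible for $H_i$, since both endpoints of each edge are shifted by the same amount and the difference in Equation~(\ref{eq:electricalflowintro}) is unchanged. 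Taking $\delta_i=-\Theta_i(v)$, I may therefore assume $\Theta_i(v)=0$ for all $i$, and then define $\Theta\colon V\to\mathbb R$ by $\Theta(v)=0$ and $\Theta(w)=\Theta_i(w)$ for $w\in C_i$; this is well-defined because the $C_i$ are pairwise disjoint and all the $\Theta_i$ agree on $v$. Finally I would verify feasibility: any edge $\{a,b\}$ of $H$ lies in a unique $H_i$, so $\Theta(a)=\Theta_i(a)$ and $\Theta(b)=\Theta_i(b)$, whence $f(a,b)=B(a,b)(\Theta_i(a)-\Theta_i(b))=B(a,b)(\Theta(a)-\Theta(b))$. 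Thus $\Theta$ witnesses electrical feasibility of $f$ for $H$.

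The main obstacle is the gluing step at the cutvertex $v$: the angle assignments on the different pieces may disagree at $v$, and one must align them. The gauge-freedom observation makes this straightforward; everything else is bookkeeping about the vertex and edge partition obtained by removing $v$.
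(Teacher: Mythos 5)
Your proof is correct and follows essentially the same route as the paper's: the only-if direction by restriction, and the if direction by normalizing each $\Theta_i$ so that $\Theta_i(v)=0$ (the gauge shift $\Theta_i-\Theta_i(v)$) and then gluing along the cutvertex. The extra bookkeeping you include about the edge partition is implicit in the paper's argument but adds nothing new.
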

\begin{proof}
  Clearly, if $\Theta$ is a feasible voltage angle assignment for $H$, then its
  restriction to $C_i \cup \{v\}$ is a feasible angle assignment for $H_i$.
  Conversely, assume that $\Theta_i$ is a feasible angle assignment for
  $H_i$.  Define $\Theta_i' = \Theta_i - \Theta_i(v)$.  
  Since for
  every edge in $H_i$ the voltage angles of the endpoints are changed by the same value,
   $\Theta'$ is a feasible voltage angle assignment for $H_i$.
  Further, $\Theta_i'(v) = 0$ for every $H_i$, which means that the function $\Theta \colon V \to \mathbb{R}$, where $\Theta(u) \mapsto
    \Theta_i'(u)$ for $u \in C_i$ is well-defined.
  Note that the restriction of $\Theta$ to any of the $H_i$ coincides
  with $\Theta_i'$.  Since every edge of $H$ belongs to exactly one of
  the $H_i$, it follows that $\Theta$ is a feasible voltage angle assignment for $H$.
\end{proof}
Iteratively applying Lemma~\ref{lem:decompose-powergrid-cutvertex}
yields the following.
\begin{corollary}
  \label{cor:electrical-feasible-blocks}
  A flow in a native power grid is electrically feasible if and only
  if it is electrically feasible for each biconnected component of the
  power grid.
\end{corollary}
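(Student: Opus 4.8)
The plan is to obtain the statement by induction, using Lemma~\ref{lem:decompose-powergrid-cutvertex} for the inductive step together with the standard fact that splitting a graph at a cutvertex partitions its block decomposition among the resulting pieces. I would induct on the number of vertices $|V(H)|$ of the native power grid $H$, distinguishing three cases: $H$ disconnected, $H$ connected without a cutvertex, and $H$ connected with a cutvertex.

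If $H$ is disconnected, say with connected components $D_1,\dots,D_m$, then each $D_j$ has strictly fewer vertices than $H$. A feasible voltage angle assignment for $H$ restricts to one for each $D_j$, and conversely angle assignments chosen independently on the pairwise vertex-disjoint $D_j$ combine to one for $H$; hence $f$ is electrically feasible for $H$ if and only if it is electrically feasible for every $D_j$. Applying the induction hypothesis to each $D_j$, and using that the biconnected components of $H$ are exactly those of the $D_j$ taken together (blocks live inside connected components), finishes this case. If $H$ is connected and has no cutvertex, then $H$ is its own unique biconnected component and the claim is trivial.

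The remaining case is that $H$ is connected with a cutvertex $v$; let $C_1,\dots,C_k$ be the vertex sets of the connected components of $H-v$ and set $H_i = H[C_i \cup \{v\}]$. By Lemma~\ref{lem:decompose-powergrid-cutvertex}, $f$ is electrically feasible for $H$ if and only if it is electrically feasible for every $H_i$; since each $H_i$ omits the vertices of the other components, $|V(H_i)| < |V(H)|$, so the induction hypothesis applies to each $H_i$. Chaining the two equivalences, it only remains to observe that the biconnected components of $H$ are precisely the biconnected components of the $H_i$, ranging over all $i$. I expect this last bookkeeping step to be the only delicate point: one has to check that cutting at $v$ neither merges two blocks coming from different $H_i$ nor splits a block of $H$, which follows because the edge sets of the $H_i$ partition $E(H)$ and no cycle of $H$ can use edges from two different $H_i$ (such a cycle would pass through $v$ twice). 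Care is also needed with trivial blocks and with the vertex $v$ shared by all $H_i$, but beyond that everything is a direct application of the lemma.
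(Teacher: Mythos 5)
Your induction is correct and is essentially the paper's argument: the paper obtains the corollary by "iteratively applying" Lemma~\ref{lem:decompose-powergrid-cutvertex}, which is exactly the cutvertex-splitting induction you spell out, including the (routine) bookkeeping that the blocks of $H$ are precisely the blocks of the pieces $H_i$. Your explicit handling of the disconnected case and the base case merely makes the paper's terse "iterate the lemma" precise, so no new ideas are needed and none are missing.
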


We observe that if $G-F$ is a forest, then each biconnected component
$H$ consists of a single edge $\{u,v\}$.  Then $\Theta(u) = f(u,v) /
B(u,v)$ and $\Theta(v) = 0$ are feasible voltage angles for any flow $f$
in $B$. Thus we conclude with the following
\begin{theorem}
  \label{thm:fvs}
  Let $H$ be a native power grid that is a forest.  Then every flow
  $f$ is electrically feasible on $H$.
\end{theorem}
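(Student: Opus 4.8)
The plan is to reduce the global statement to a check on each biconnected component of $H$, and then to verify that check directly for the only kind of nontrivial biconnected component that a forest can contain, namely a single edge.

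First I would invoke Corollary~\ref{cor:electrical-feasible-blocks}: a flow $f$ is electrically feasible for $H$ if and only if it is electrically feasible for every biconnected component of $H$. This is precisely the step that lets us ignore the question of how voltage angles at shared cutvertices have to be reconciled, since that gluing is already packaged into the iterated application of Lemma~\ref{lem:decompose-powergrid-cutvertex} that yields the corollary.

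Next I would use the structural fact recalled in Section~\ref{sec:preliminaries} that a biconnected component of a forest is either trivial (a single vertex) or consists of a single edge. A trivial component contains no edge, hence no instance of Equation~(\ref{eq:electricalflowintro}) to satisfy, so any angle assignment works there. For a component that is a single edge $\{u,v\}$, I would exhibit an explicit feasible angle assignment: put $\Theta(v) = 0$ and $\Theta(u) = f(u,v)/B(u,v)$. Then $B(u,v)(\Theta(u) - \Theta(v)) = f(u,v)$, so Equation~(\ref{eq:electricalflowintro}) holds on that edge, and by antisymmetry of $f$ also in the reverse orientation. Assembling these per-component assignments through Corollary~\ref{cor:electrical-feasible-blocks} gives a feasible angle assignment for all of $H$, which proves the theorem.

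The only point that warrants a remark is that the quotient $f(u,v)/B(u,v)$ is well defined, i.e., that no branch has zero susceptance; this is built into the DC model, where every branch carries a nonzero susceptance, so it is not a genuine obstacle. In fact I expect no real difficulty at all: the theorem is essentially a corollary of Corollary~\ref{cor:electrical-feasible-blocks} combined with the elementary observation on biconnected components of forests, and the substantive content of the whole argument already resides in Lemma~\ref{lem:decompose-powergrid-cutvertex}.
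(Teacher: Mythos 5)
Your proof is correct and follows exactly the paper's argument: it invokes Corollary~\ref{cor:electrical-feasible-blocks}, observes that the biconnected components of a forest are trivial or single edges, and exhibits the same explicit angle assignment $\Theta(u) = f(u,v)/B(u,v)$, $\Theta(v) = 0$ on each edge. No gaps.
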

Thus, when $F$ is a \emph{feedback vertex set} of $G$, i.e., $G-F$ is
a forest, then every flow on $G$ is electrically feasible for $G-F$,
and thus any feasible flow for $G_F$ is electrically feasible for
$G_F$.  It follows that the flow model and the hybrid model are
equivalent in this case.  In particular, whenever $F$ is a feedback
vertex set, instead of solving the LP for the hybrid model, we can
rather assume the flow model and compute an optimal solution using a,
potentially more efficient, flow algorithm.  It follows from
Theorem~\ref{thm:fvs} that this solution is optimal also in the hybrid
model.

Figure~\ref{fig:barchart-cases-controller} shows for each of our
instances the relative number of vertices necessary to obtain a vertex
cover, a feedback vertex set with respect to forests, and the number
of vertices necessary to obtain full control.  In all instances a
vertex cover is two to three times larger than a feedback vertex set
(for forests) and the vertex set necessary for full control.
Comparing the relative number of controllers for full control with the
size of a feedback vertex sets shows that the number to get an optimal
placement is in many cases smaller than the size of a feedback vertex
set.  Thus, in the optimal solutions, the native power grid does not
always represent a forest, but can also include cycles.  A closer
inspection showed that this is in particular the case for instances
that are operated far from their capacity limits.

We now consider what happens when cycles exist in a native power
grid.  To this end, we start with the simplest case of a power grid
that consists of a single cycle $C$.  We say that two flows $f$ and
$f'$ on a network $G=(V,E)$ are \emph{equivalent} if for each vertex
$v \in V$ we have $f_\net(v) = f_\net'(v)$.
\begin{lemma}
  \label{lem:cycle-equivalent-flow}
  Let $C$ be a native power grid that is a cycle.  For every
  flow $f$ there exists a unique equivalent flow $f'$ that is
  electrically feasible for $C$.
\end{lemma}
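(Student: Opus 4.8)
The plan is to reduce the statement to a one‑dimensional affine root‑finding problem. Write the cycle as $C = v_1 v_2 \cdots v_n v_1$ with edges $e_i = \{v_i, v_{i+1}\}$ (indices taken mod $n$) oriented as $v_i \to v_{i+1}$, and for a flow $g$ on $C$ abbreviate $\phi_i(g) = g(v_i, v_{i+1})$ and $\beta_i = 1/B(v_i, v_{i+1})$. Note that $\beta_i > 0$ since susceptances are positive.

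\textbf{Step 1: describe the flows equivalent to $f$.} If $g$ and $g'$ are equivalent flows on $C$, then $g - g'$ is a \emph{circulation}, i.e.\ $(g-g')_\net(v) = 0$ for all $v$. Evaluating this at $v_i$ gives $\phi_i(g-g') - \phi_{i-1}(g-g') = 0$, so all the edge values $\phi_i(g-g')$ equal a common scalar $\theta \in \mathbb{R}$; that is, $g - g' = \theta\,\chi$, where $\chi$ is the unit circulation with $\phi_i(\chi) = 1$ for all $i$. Conversely each $f + \theta\,\chi$ is a flow equivalent to $f$. Hence the set of flows equivalent to $f$ is exactly the one‑parameter family $\{\,f + \theta\,\chi : \theta \in \mathbb{R}\,\}$.

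\textbf{Step 2: characterize electrical feasibility on $C$ by Kirchhoff's voltage law.} Rewrite~(\ref{eq:electricalflowintro}) for edge $e_i$ as $\beta_i\,\phi_i(g) = \Theta(v_i) - \Theta(v_{i+1})$. Summing over $i = 1,\dots,n$ makes the right‑hand side telescope to $0$, so $\sum_{i=1}^n \beta_i\,\phi_i(g) = 0$ is necessary for electrical feasibility. Conversely, if this sum vanishes, define $\Theta \colon V \to \mathbb{R}$ by $\Theta(v_1) = 0$ and $\Theta(v_{i+1}) = \Theta(v_i) - \beta_i\,\phi_i(g)$ for $i = 1,\dots,n-1$; the vanishing of the sum is precisely the consistency condition that makes the implied value at $v_1$ come out to $0$ again, so $\Theta$ is well defined and satisfies~(\ref{eq:electricalflowintro}) on every edge of $C$. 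Thus $g$ is electrically feasible for $C$ iff $\sum_{i=1}^n \beta_i\,\phi_i(g) = 0$.

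\textbf{Step 3: combine.} The function $\Phi(\theta) := \sum_{i=1}^n \beta_i\,\phi_i(f + \theta\,\chi) = \sum_{i=1}^n \beta_i\,\phi_i(f) + \theta \sum_{i=1}^n \beta_i$ is affine in $\theta$ with slope $\sum_{i=1}^n \beta_i > 0$, hence has a unique root $\theta^\ast = -\bigl(\sum_i \beta_i\,\phi_i(f)\bigr)/\bigl(\sum_i \beta_i\bigr)$. By Steps 1 and 2, $f' := f + \theta^\ast\,\chi$ is the unique flow equivalent to $f$ that is electrically feasible for $C$. I do not anticipate a genuine obstacle: the only points requiring care are fixing one orientation of the cycle consistently so that the telescoping sum and the one‑dimensional description of circulations are bookkept correctly, and invoking positivity of the susceptances (equivalently $\sum_i \beta_i \neq 0$) to conclude that $\Phi$ has a \emph{unique} root, which is exactly what delivers uniqueness of $f'$.
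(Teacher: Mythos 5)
Your proposal is correct and follows essentially the same route as the paper: both parametrize the flows equivalent to $f$ by a single circulation offset ($\theta$ resp.\ $\Delta$) and determine that offset uniquely from the weighted cycle sum, arriving at the same formula $\theta^\ast = -\bigl(\sum_i f_i/B_i\bigr)/\bigl(\sum_i 1/B_i\bigr)$. The only cosmetic difference is that you make the Kirchhoff-voltage-law characterization (telescoping sum $\sum_i \beta_i\phi_i = 0$, with $\Theta$ recovered by integration) explicit, where the paper phrases the same computation as solvability of a linear system in $(\Theta,\Delta)$; your explicit use of positive susceptances matches the paper's implicit assumption when it divides by $B_i$.
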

\begin{proof}
  Let $v_1,\dots,v_n$ be the vertices of $C$ as they occur along the
  cycle, i.e., $f(v_i,v_j)= 0$ unless $v_i$ and $v_j$ are neighbors on the cycle.
  Assume we wish to change the amount of flow
  from $v_1$ to $v_2$ by a fixed amount $\Delta$ and obtain an
  equivalent flow.  The net out-flow conservation at the vertices then
  uniquely determines the change of flow along the remaining edges.
  Hence, every flow $f'$ equivalent to $f$ is obtained from $f$ by
  choosing some amount $\Delta$ and setting $f'(v_i,v_{i+1}) =
  f(v_i,v_{i+1}) + \Delta$ and $f'(v_{i+1},v_i) = f(v_{i+1},v_i) -
  \Delta$, where $v_{n+1} = v_1$.

  Now the existence of a suitable offset $\Delta$ and the associated
  feasible voltage angles can be expressed as a linear system of equations.
  Namely, for edge $(v_i,v_{i+1})$, $i=1,\dots,n$ (again using
  $v_{n+1} = v_1$), we have the equation 
  \[B(v_i,v_{i+1}) \cdot
  \Theta(v_i) - B(v_i,v_{i+1}) \cdot \Theta(v_{i+1}) - \Delta =
  f(v_i,v_{i+1})\,.\] 
%
%
It is readily seen that the $n$ equations are linearly independent,
and hence a solution exists.  Moreover, dividing each of the equations
by $B(v_i,v_{i+1})$ and summing them up yields $-\sum_{i=1}^n
1/B(v_i,v_{i+1})\Delta = \sum_{i=1}^nf(v_i,v_{i+1})/B(v_i,v_{i+1})$, which shows
  that the value $\Delta$ is uniquely determined.
\end{proof}
Note however, that the equivalent flow $f'$ whose existence is
guaranteed by Lemma~\ref{lem:cycle-equivalent-flow} does not
necessarily satisfy the capacity constraints. Also the evaluation of
$f'$ in terms of line losses may change.  If neither of these is a
limiting factor, e.g., if $\lambda = 1$ and line capacities are
sufficiently large, we can show a stronger version of
Theorem~\ref{thm:fvs}.  Recall that a cactus is a graph where every edge belongs
to at most one cycle.
\begin{theorem}
  \label{thm:cactus}
  Let $G_F$ be a power grid with flow control buses at the vertices in $F$ such
  that the maximum native power grid $G-F$ is a cactus  and every edge
  of $G-F$ that lies on a cycle has infinite capacity. For any feasible flow $f$ 
  there exists an equivalent feasible flow $f'$
  that is electrically feasible for $G_F$.
\end{theorem}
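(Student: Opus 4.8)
The plan is to reduce to the single-cycle case already handled by Lemma~\ref{lem:cycle-equivalent-flow}, exploiting that the biconnected components of a cactus are pairwise edge-disjoint. First I would invoke Corollary~\ref{cor:electrical-feasible-blocks}: a flow in a native power grid is electrically feasible exactly when it is electrically feasible for every biconnected component. Since $G-F$ is a cactus, each such component is either a single edge or a single cycle, and by the observation preceding Theorem~\ref{thm:fvs} every flow is automatically electrically feasible on a single edge. Hence the only possible obstructions come from the cycle components of $G-F$.

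Next I would build $f'$ block by block. For every edge $e$ of $G$ that is incident to a vertex of $F$, or that is a bridge of $G-F$ (i.e.\ forms a single-edge biconnected component), set $f'(e)=f(e)$. For each cycle $C$ of $G-F$, apply Lemma~\ref{lem:cycle-equivalent-flow} to the restriction of $f$ to $C$: this yields a unique flow on $C$, equivalent to $f$ on $C$, that is electrically feasible for $C$; concretely it adds a fixed offset $\Delta_C$ to the circulation around $C$ and changes no other flow values. Put $f'$ equal to this modified flow on the edges of $C$. Because the cactus property makes the edge sets of distinct biconnected components disjoint, and edges incident to $F$ are disjoint from $E(G-F)$, these prescriptions do not conflict, so $f'$ is well defined on all of $E$.

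Then I would verify the three required properties. \emph{Equivalence:} Lemma~\ref{lem:cycle-equivalent-flow} preserves the net out-flow at every vertex of each cycle, the bridge and controller edges are untouched, and every edge lies in exactly one biconnected component or is incident to $F$; summing contributions edge by edge gives $f'_\net(v)=f_\net(v)$ for all $v$, including at cutvertices where several blocks meet. This immediately transfers constraints~(\ref{eq:conservation})--(\ref{eq:generator}) from $f$ to $f'$. \emph{Capacity:} on bridge and controller edges $f'=f$ satisfies~(\ref{eq:capacity}), while the cycle edges of $G-F$ have infinite capacity by hypothesis, so nothing is to check there. \emph{Electrical feasibility for $G_F$:} by the definition of electrical feasibility for $G_F$ it suffices that $f'$ be electrically feasible for $G-F$, which by Corollary~\ref{cor:electrical-feasible-blocks} reduces to the biconnected components; single edges are free, and on each cycle $C$ the flow $f'$ is precisely the electrically feasible flow supplied by Lemma~\ref{lem:cycle-equivalent-flow}.

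The one point that needs care — and the only place the cactus hypothesis is really used beyond the block decomposition — is the non-interference of the per-cycle corrections at cutvertices: modifying the circulation on a cycle $C$ alters only flow values on edges of $C$ and, by Lemma~\ref{lem:cycle-equivalent-flow}, leaves the net out-flow at each vertex of $C$ (in particular at any cutvertex shared with another block) unchanged, so the corrections on all cycles can be applied simultaneously and independently. Everything else is bookkeeping over the block--cutvertex structure of $G-F$.
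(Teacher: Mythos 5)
Your proposal is correct and follows essentially the same route as the paper's proof: decompose $G-F$ into biconnected components via Corollary~\ref{cor:electrical-feasible-blocks}, modify the flow only on cycle blocks using Lemma~\ref{lem:cycle-equivalent-flow}, and observe that single-edge blocks are automatically electrically feasible. Your explicit check of the capacity and feasibility constraints (using the infinite-capacity hypothesis on cycle edges) is in fact slightly more careful than the paper's write-up, which leaves that verification implicit.
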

\begin{proof}
  We first construct an equivalent flow $f'$ as follows.  For each
  biconnected component $C$ of $G-F$ that is a cycle, we consider the
  restriction $f_C$ of $f$ to $C$.  By
  Lemma~\ref{lem:cycle-equivalent-flow}, there exists a unique
   flow $f_C'$ equivalent to $f_C$ that is electrically
  feasible for $C$.  We now define \[f'(u,v) =
  \begin{cases}
    f'_C(u,v) & \text{if } u,v \text{ are in a cycle } C\\
    f(u,v) & \text{otherwise}
  \end{cases}
\]
Note that changing in $f$ the flow along the edges of a cycle $C$ to
the values determined by $f_C'$ preserves the net out-flow at every
vertex, and hence $f'$ is a flow equivalent to $f$.  We claim that
$f'$ is electrically feasible.  To see this, observe that each block
of $G-F$ is either a single edge or a cycle $C$.  In the former case,
$f'$ is trivially feasible on the block.  In the latter, we have that
$f'$ coincides on $C$ with $f_C'$, which is electrically feasible.  By
Corollary~\ref{cor:electrical-feasible-blocks} $f'$ is electrically
feasible.
\end{proof}
Let $e_1, \dots, e_k$ be the edges of a cycle in $G_F$ and $f_i$ be a flow on an edge $e_i$ in cycle $C$. 
We abbreviate the susceptance $B(e_i)$ on an edge in a cycle by $B_i$. The maximum susceptance is denoted by $B_{\max}$ with $B_{\max} = \max_{1\leq i\leq k}(B_i)$ for all $i = 1, \dots, k$. 
The minimum susceptance $B_{\min}$ is defined analogously.
In practice, the requirement for infinite capacity in Theorem~\ref{thm:cactus} is unnecessary. In fact, we 
can bound the sufficient large capacities of Theorem~\ref{thm:cactus} 
by rearrange the equation of the proof of Lemma~\ref{lem:cycle-equivalent-flow} 
such that the change of flow is bounded by the ratio of maximum to minimum susceptance times the average flow in the cycle $C$ that is
\[\Delta = -\frac{\sum_{i = 1}^{k} \frac{f_i}{B_i}}{\sum_{i=1}^{k} \frac{1}{B_i}}\leq\frac{B_{max}}{B_{min}}\cdot \frac{\left(\sum_{i = 1}^{k} f_i\right)}{k}.\] 

We refer back to Figure~\ref{fig:barchart-cases-controller}, which in
addition to the previously mentioned parameters also shows the size of
a minimum feedback vertex set with respect to cacti.  In all cases the
number of vertices for full control is between the size of feedback
vertex sets with respect to forests and cacti.  For the cases 14, 57
and 118, the minimum number of controllers for achieving full control
indeed results in a native power grid that forms a cactus, although
they do not necessarily achieve the smallest feedback number due to
some influence of line capacities.
\section{Grid Operation Under Increasing Loads}	
\label{sec:grid-control-when-approaching-capacity-limits}
In the previous section we have seen that typically selecting a small
fraction of the buses as flow control buses suffices to achieve full
control in the network.  In this section we study what happens when
even fewer flow control buses are available and whether few flow control buses
allow a better utilization of the existing infra-structure in the
presence of increasing loads.

To measure the controllability in the presence of very few flow
control buses, we simulate a load increase by a factor~$\rho$ in the
power grid by decreasing all line capacities by the factor $1/\rho$.
This has the effect that the overall demand remains constant and thus
any change of costs is due to flow redirections.  It is then expected
that, once the load increases, the network without flow control buses
will require significantly higher operating costs, since the main
criterion for determining the generator outputs becomes the overall
feasibility of the flow rather than the cost-efficient generation of
the energy.  At some point, the load increases to a level where, by
means of changing only the generator outputs, a feasible energy flow
cannot be found.  We compare the operation costs to solutions in power
grids with a small number of flow control buses.  Specifically, our
plots show two things.  First, the operation costs for various small
numbers of flow control buses and, second, the operation costs and the
number of flow control buses for achieving full control in the network
with respect to the load increase factor~$\rho$.

Of course these operation costs again vary depending on the value of
$\lambda$.  Since most related work ignores line losses, we consider
only the case $\lambda = 1$, i.e., only generation costs are taken
into account.  Varying $\lambda$ changes the objective value, but it
does not influence the existence of solutions with a certain number of
flow control buses.  Recall from the plot in
Figure~\ref{fig:plot-controller-weight} that, if the load
increase~$\rho$ is small, full control can be achieved without flow
control buses for $\lambda=1$.
\begin{figure}[tb!]
  	\centering
    	\includegraphics[width=\plotscaleOne\linewidth, page=7]{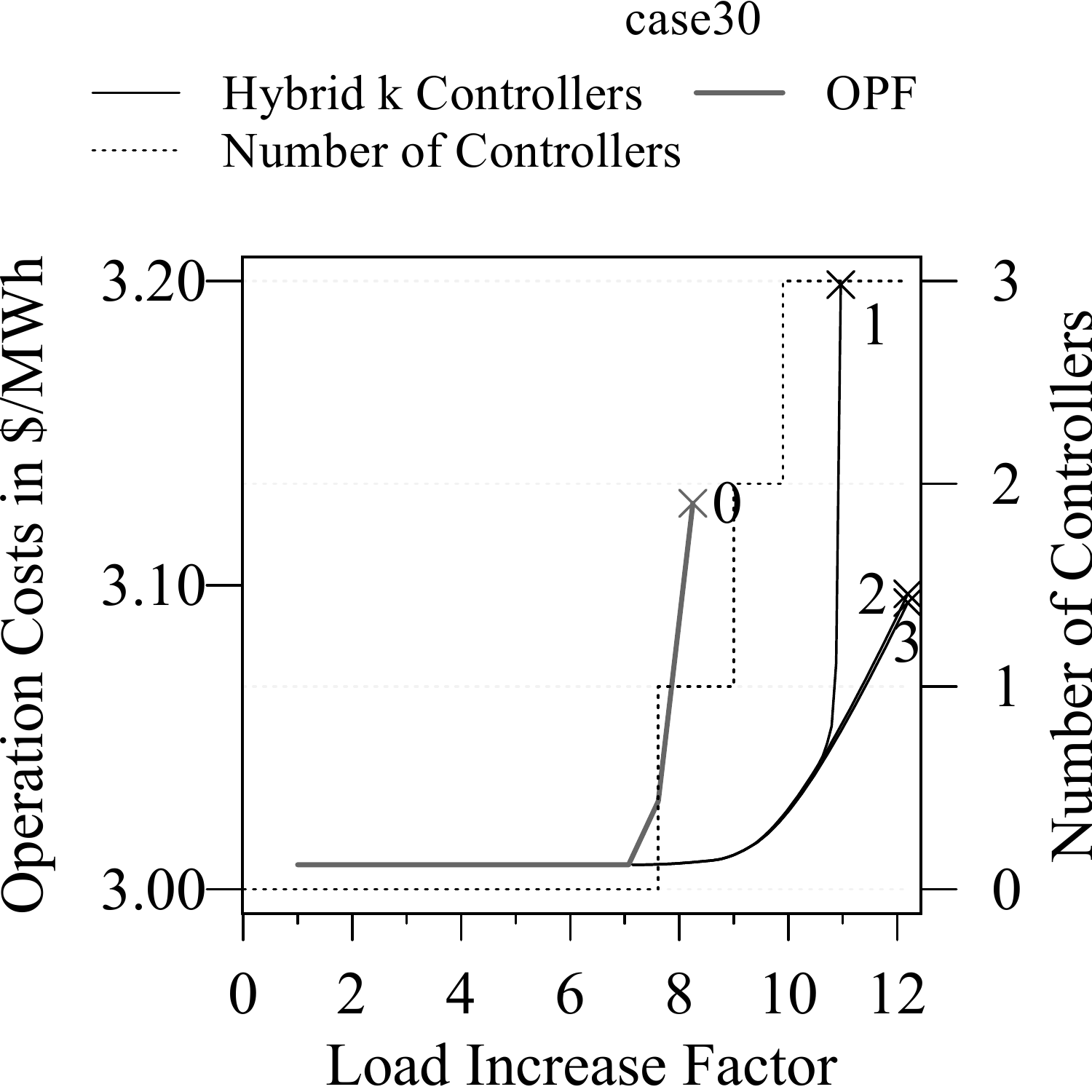}
	\caption{Operation costs of \texttt{case57} for OPF and the
          hybrid model with $1$ and $2$ control buses with respect to
          the load factor~$\rho$.}
	\label{fig:plot-capacity-cost-controller}
\end{figure}
In the IEEE instances all lines have very large capacities, often much
larger than even the total demand in the network, e.g., 9900 MW in
\texttt{case14} and \texttt{case57}.  To better highlight the
interesting parts, similar to the work by Lima et
al.~\cite{lgkm-psp-03}, we first scale all line capacities such that
the smallest capacity is equal to the total demand of the consumers as
given in Table~\ref{tab:examples}.  This changes neither the existence
nor the cost of solutions.  We increase the load until the flow model
becomes infeasible; at this point a feasible solution cannot be
achieved by adding flow control buses and adding additional lines to
the network becomes unavoidable.

Figure~\ref{fig:plot-capacity-cost-controller} shows the results of
our experiment for the power grid \texttt{case57}.  To improve
readability, all costs have been rescaled by the total demand in the
network, and thus give the cost per MWh.  The black curve shows the
operation cost with sufficient control buses for full control.  The
dotted staircase curve shows the number of flow control buses that are
necessary to achieve full control.  Moreover, for each number of flow
control buses from~1 up to the number required at the point when
further load increase makes the instance infeasible, we show the
optimal operation costs with this number of flow control buses.
Finally, the bold gray curve shows the operation cost with OPF, i.e.,
without any control buses.  The plots for the other IEEE
instances 
can be found in
Appendix~\ref{app:grid-control-when-approaching-capacity-limits}.

As expected, increasing loads result in increasing operation costs.
Interestingly, very few control buses suffice for extending the
maximal feasible operation point.  This is emphasized by the curve for
two control buses in Figure~\ref{fig:plot-capacity-cost-controller},
which continues to a load increase of factor~23.09, whereas OPF works
only for up to an increase of roughly 17.27 and exhibits a significant
increase in operation costs at higher loads.  In contrast, when using
flow control buses, the costs start increasing much later and more
moderately.  Interestingly, the solution with one control bus remains roughly
equivalent to the solution with two control buses until shortly before
the end of its feasibility range.  This example shows that control buses
indeed extend the feasible operation point and also decrease the corresponding
operation costs even if there are only very few controllers available.
\section{Conclusion}	
\label{sec:conclusion}
Assuming the existence of special buses that control the flow on all
their incident transmission lines, we have presented a hybrid model
for including some flow control buses.  In this model, we have shown
that relatively few control buses suffice for achieving full control.
Further, we scaled the load of the network and showed that even fewer
flow control buses improve the loadability and have a lower cost
increase compared to OPF.

Our work shows the benefits of augmenting power grids with flow
control devices.  Using our theoretical model, we were able to explain
our empirical observations on controller placement with
graph-theoretical means.  While this also explains previous
observations of Gerbex et al.~\cite{gcg-olmtf-01}, the main drawback
is that the model is based on several strong, simplifying assumptions.

Future work should consider more realistic power grid models both in
terms of the control units, which are placed on transmission lines
rather than buses, and using the AC power grid model.
\bibliographystyle{abbrv}
\bibliography{bibliography/books,bibliography/article,bibliography/internet} 
\clearpage
\appendix
\section{Problem Formulation} \label{app:lp}
In this appendix, we present the problem formulation from Section~\ref{sec:model} as an integer linear program (LP) formulation. In the LP below, we minimize the generation costs $c_g$ and losses $c_\ell$ shown in Equation~\ref{eq:app:objective} under flow and electrical constraints. The main flow constraints comprise the conservation of flow, demand and generation constraints, and capacity constraints shown in Equations~\ref{eq:app:conservation},~\ref{eq:app:demand},~\ref{eq:app:generator} and~\ref{eq:app:capacity}, respectively. Whereas the electrical constraints describe the electrical feasibility shown in Equation~\ref{eq:app:electricalfeasible}. Recall that each consumer $u$ has a power demand $d_u\in\mathbb{R}$ and $F\subseteq V$ is the set of flow control buses.
\begin{figure*}[b!]
\centering
\begin{align}
& \underset{f}{\text{min}}
& c_\lambda(f)\,=\, & \lambda \cdot c_g(f) + (1-\lambda)\cdot c_\ell(f)\label{eq:app:objective}\\
& \text{s.t.}
& \sum_{\{v,u\} \in E} f(v,u)\,=\,&  0 & & & & v \in V\setminus (V_G \cup V_C)\label{eq:app:conservation}\\
& & f(u,v)\,=\,& B(u,v) (\Theta(u) - \Theta(v)) & & & & \forall u,v\in V\setminus F, \forall\{u,v\}\in E\label{eq:app:electricalfeasible}\\
& & \sum_{\{v,u\} \in E} f(v,u)\,=\,&   -d_v & & & & v \in V_C\label{eq:app:demand}\\
& & 0\,\le\,&\sum_{\{v,u\} \in E} f(v,u) \,\le\,x_v & & & & v \in V_G\label{eq:app:generator}\\
& & -c(e)\,\le\,& f(u,v)\,\le\, c(e) & & & &\forall e=\{u,v\} \in E.\label{eq:app:capacity}
\end{align}
\end{figure*}
\section{Placing Flow Control Buses} \label{app:hybridmodel}
\begin{figure}[H]
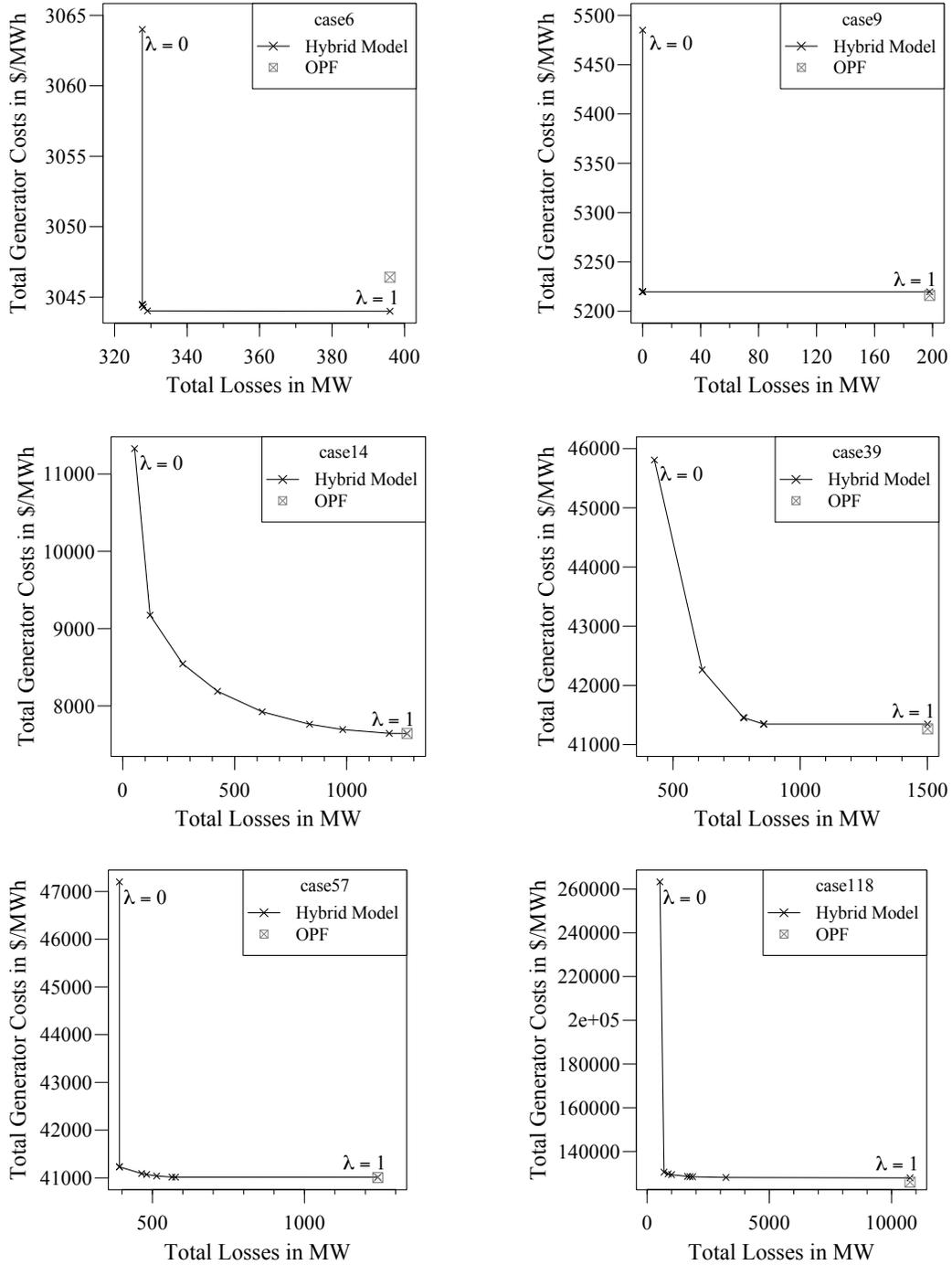
%
  \begin{subfigure}[t]{.45\textwidth}
  	\centering
    	\includegraphics[width=0.95\linewidth, page=1,trim=0cm 0cm 0cm
        0cm]{figures/plotCostsVsLosses.pdf}
	\label{fig:plot-costs-losses-case6}  
  \end{subfigure}
\hfill
  \begin{subfigure}[t]{.45\textwidth}
  	\centering
    	\includegraphics[width=0.95\linewidth, page=2,trim=0cm 0cm 0cm
        0cm]{figures/plotCostsVsLosses.pdf}
	\label{fig:plot-costs-losses-case9}
\end{subfigure}

\vspace{0.3cm}
\begin{subfigure}[t]{.45\textwidth}
  	\centering
    	\includegraphics[width=0.95\linewidth, page=3,trim=0cm 0cm 0cm
        0cm]{figures/plotCostsVsLosses.pdf}
\end{subfigure}
\hfill
\begin{subfigure}[t]{.45\textwidth}
  	\centering
    	\includegraphics[width=0.95\linewidth, page=5,trim=0cm 0cm 0cm 0cm]{figures/plotCostsVsLosses.pdf}
	\label{fig:plot-costs-losses-case39}
\end{subfigure}

\vspace{0.3cm}
\begin{subfigure}[t]{.45\textwidth}
  	\centering
    	\includegraphics[width=0.95\linewidth, page=6,trim=0cm 0cm 0cm 0cm]{figures/plotCostsVsLosses.pdf}
	\label{fig:plot-costs-losses-case57}
\end{subfigure}
\hfill
\begin{subfigure}[b]{.45\textwidth}
  	\centering
    	\includegraphics[width=0.95\linewidth, page=7,trim=0cm 0cm 0cm 0cm]{figures/plotCostsVsLosses.pdf}
	\label{fig:plot-costs-losses-case118}
\end{subfigure}
\vspace{0cm}
\caption{Trade-off of generator costs and costs of the losses
          depending as $\lambda$ varies from $0$ to $1$.   The square cross marks the solution computed by OPF.}
\end{figure}
\section{Grid Operation Under Increasing Loads}
\label{app:grid-control-when-approaching-capacity-limits}
\begin{figure}[H]
\vspace{-0.4cm}
  \begin{subfigure}[t]{.45\textwidth}
  	\centering
    	\includegraphics[width=0.95\linewidth, page=3]{figures/plotCapacityReductionVsCostsController.pdf}
	\label{fig:plot-capacity-cost-controller-case6}
\end{subfigure}
\hfill
\begin{subfigure}[t]{.45\textwidth}
  	\centering
    	\includegraphics[width=0.95\linewidth,
        page=4]{figures/plotCapacityReductionVsCostsController.pdf}
	\label{fig:plot-capacity-cost-controller-case9}
\end{subfigure}

\vspace{0.5cm}
\begin{subfigure}[t]{.45\textwidth}
  	\centering
    	\includegraphics[width=0.95\linewidth, page=5]{figures/plotCapacityReductionVsCostsController.pdf}
	\label{fig:plot-capacity-cost-controller-case14}
\end{subfigure}
\hfill
\begin{subfigure}[t]{.45\textwidth}
  	\centering
    	\includegraphics[width=0.95\linewidth, page=1]{figures/plotCapacityReductionVsCostsController.pdf}
	\label{fig:plot-capacity-cost-controller-case30}
\end{subfigure}

\vspace{0.5cm}
\begin{subfigure}[t]{.45\textwidth}
  	\centering
    	\includegraphics[width=0.95\linewidth, page=2]{figures/plotCapacityReductionVsCostsController.pdf}
	\label{fig:plot-capacity-cost-controller-case39}
\end{subfigure}
\hfill
\begin{subfigure}[t]{.45\textwidth}
  	\centering
    	\includegraphics[width=0.95\linewidth, page=6]{figures/plotCapacityReductionVsCostsController.pdf}
	\label{fig:plot-capacity-cost-controller-case118}
\end{subfigure}
\vspace{0cm}
	\caption{Operation costs of \texttt{case6} to \texttt{case118} for OPF and the
          hybrid model with their control buses with respect to
          the load factor~$\rho$.}
\end{figure}
\end{document}